\documentclass[aps,twocolumn,prd,10pt,preprintnumbers,superscriptaddress]{revtex4-2}
\usepackage{physics}
\usepackage{amsmath, amssymb, graphics, setspace}
\usepackage{amsthm}
\newtheorem{theorem}{Theorem}
\newtheorem{corollary}[theorem]{Corollary}
\usepackage{tikz}
\tikzset{every picture/.style={line width=0.75pt}}
\usepackage{tabularray}
\usepackage{orcidlink}
\usepackage{hyperref}

\begin{document}
\preprint{USTC-ICTS/PCFT-26-27}
\title{Genus drop involving non-hyperelliptic curves in Feynman integrals}

\author{Feiyu Yang\,\orcidlink{0009-0009-9304-5657}}
\email{fyyang22@mail.ustc.edu.cn}
\affiliation{Interdisciplinary Center for Theoretical Study, University of Science and Technology of China, Hefei, Anhui 230026, China}

\author{Jianyu Gong\,\orcidlink{0000-0002-3325-5777}}
\email{jianyu\_gong@sjtu.edu.cn}
\affiliation{Institute of Nuclear and Particle Physics (INPAC), Shanghai Jiao Tong University, Shanghai 200240, China}

\author{Yang Zhang\,\orcidlink{0000-0001-9151-8486}}
\email{yzhphy@ustc.edu.cn}	
\affiliation{Interdisciplinary Center for Theoretical Study, University of Science and Technology of China, Hefei, Anhui 230026, China}
\affiliation{Peng Huanwu Center for Fundamental Theory, Hefei, Anhui 230026, China}
\affiliation{Center for High Energy Physics, Peking University, Beijing 100871, People’s Republic of China}

\begin{abstract}
For both theoretical and phenomenological studies, it is important to analyze the function types of Feynman integrals. The phenomenon of genus drop between different representations of hyperelliptic Feynman integrals was discussed in \cite{Marzucca2024Genusdrop}. In this paper, 
we reformulate the extra-involution mechanism of \cite{Marzucca2024Genusdrop} as a special case of an unramified double covering between algebraic curves, and show that this covering mechanism also explains genus drops accompanied by a curve-type change from non-hyperelliptic to hyperelliptic for a class of three-loop Feynman diagrams. We also demonstrate that within a specific framework, the origin of the discrete spacetime symmetry that leads to the genus drop in hyperelliptic cases is manifest. This work also points out that there exist non-hyperelliptic Feynman integrals that exhibit no apparent genus drop.
\end{abstract}

\maketitle

\section{Introduction}
Precise computation of Feynman integrals is crucial for the theoretical input in modern particle physics. To compute Feynman integrals efficiently, it is very helpful to have a deep understanding of the mathematical structure of Feynman integrals. A major complexity of Feynman integrals comes from the various types of special functions in Feynman integrals. In one-loop cases, there are multiple polylogarithms \cite{Chen1977Iteratedpath,Goncharov1998Multiplepolylogarithms,Goncharov2001Multiplepolylogarithms,Duhr2019FunctionTheory,Duhr2025Analyticresults}, which are iterated integrals of rational functions with simple poles. Starting from two-loop, more non-trivial integration kernels would appear, especially when there are massive propagators. Usually, these integration kernels can be written as integrations of rational functions defined on some projective algebraic varieties, like elliptic curves \cite{Bauberger1995Analyticalnumerical,Laporta2005Analytictreatment,Muller-Stach2012secondorderdifferential,Adams2013twoloopsunrise,Muller-Stach2014PicardFuchsequations,Adams2014twoloopsunrise,Bloch2015ellipticdilogarithm,Adams2015twoloopsunrise,Adams2016walksunset,Adams2016iteratedstructure,Adams2016sunriseintegral,Adams2016kiteintegral,Bogner2017Analyticcontinuation,Bourjaily2018EllipticDoubleBox,Broedel2018Ellipticpolylogarithmsa,Adams2018Feynmanintegrals,Adams2018eformdifferential,Broedel2018Ellipticpolylogarithms,Adams2018planardouble,Adams2018Analyticresults,Honemann2018electronselfenergy,Broedel2019EllipticFeynman,Bogner2020unequalmass,Weinzierl2021Modulartransformations,Frellesvig2022EpsilonFactorized,Muller2022Feynmanintegral,Dlapa2023algorithmicapproach,Delto2024TwoLoopQED,Badger2024Twoloopintegrals,Frellesvig2024efactorisedbases,Duhr2024electronselfenergy,Marzucca2025TwoLoopMaster,Becchetti2025Analytictwoloop}, higher-genus curves \cite{Huang2013GeneraCurves,Hauenstein2015GlobalStructure,Georgoudis2015TwoloopIntegral,Duhr2025Canonicaldifferential}, Calabi--Yau manifolds \cite{Brown2012K3phi4,Bloch2015Feynmanintegral,Bloch2018Localmirror,Bourjaily2018TraintracksCalabiYaus,Bourjaily2019BoundedBestiary,Bourjaily2020EmbeddingFeynman,Klemm2020lloopBanana,Bonisch2021AnalyticStructure,Bourjaily2022FunctionsMultiple,Bonisch2022Feynmanintegrals,Pogel2023TamingCalabiYau,Lairez2023Algorithmsminimal,Frellesvig2024CalabiYaumeets,Frellesvig2024ClassifyingpostMinkowskian,Frellesvig2024CalabiYauFeynman,Duhr2025Aspectscanonical,Duhr2025Threeloopbanana,Pogel2025unequalmassthreeloop}, and so on \cite{Schimmrigk2024SpecialFano,Cruz2025FanoReflexive,Bargiela2025spectrumFeynmanintegral}. 

To determine the geometry of a Feynman integral, a common method is to take the maximal cut, where all propagators are on shell. Maximal cuts are particularly suitable for studying the geometric structure of Feynman integrals, not only because they are relatively easy to calculate, but also because it provides a solution of the homogeneous part of the differential equations satisfied by Feynman integrals \cite{Primo2017maximalcut}. However, it is illustrated in \cite{Marzucca2024Genusdrop} that the maximal cuts taken in different representations can give algebraic curves of different genera; the simpler, lower-genus curve alone is sufficient for the computation of Feynman integrals \cite{Duhr2025Canonicaldifferential}. This phenomenon is particularly intriguing because it reveals a gap in our understanding of the geometric structures of Feynman integrals: while simplifications of the geometric structures are possible, it is still not fully guaranteed to find the minimal geometric structure for a high-loop Feynman integral. Therefore, it is necessary to conduct an in-depth study of the genus drop phenomenon to understand its underlying physical origin.

This paper investigates two types of genus drop:
\begin{description}
\item[Type I] genus drops accompanied by a curve-type change from non-hyperelliptic to hyperelliptic;
\item[Type II] genus drops between hyperelliptic curves.
\end{description}
Among these, Type I is new and has not been analyzed in the literature: here the first curve is non-hyperelliptic, so the method of \cite{Marzucca2024Genusdrop} does not directly apply. 
A hyperelliptic curve is a smooth projective algebraic curve of genus $g\geq 2$ that admits an affine model of the form $y^2=f(x)$, where $f(x)$ is a polynomial of degree $n$ with no multiple roots. Its genus is given by $\lfloor \frac{n-1}{2} \rfloor$. Curves of genus $g \geq 3$ that do not admit such a form are called non-hyperelliptic (a genus-$2$ curve always admits such a form \cite{miranda1995algebraic}). Non-hyperelliptic curves are generally more complicated to handle, as they lack a simple uniform equation as in the hyperelliptic case.

This paper analyzes the mechanisms behind both types of genus drop from mathematical and physical viewpoints. Mathematically, we demonstrate that both types are induced by a specific kind of map between algebraic curves: the unramified double covering. Within this framework, the genus drops follow the pattern
\begin{equation}
    g\mapsto \frac{g+1}{2},
\end{equation}
as dictated by the Riemann--Hurwitz formula. The property of being unramified also tells us how the period matrices of the curves of different genera are related, via Prym varieties \cite{Mumford1974PrymVarieties}. Physically, Type I  is tied to the way the maximal cut is applied; Type II is tied to the symmetry of the spacetime. We will show that the spacetime dimension spanned by the external legs induces a spacetime reflection symmetry in the Feynman integral, which leads to the genus drop in the purely hyperelliptic case and confirms the expectation of \cite{Marzucca2024Genusdrop} that the genus drop is induced by discrete Lorentz symmetries.

To illustrate our idea, we give one detailed example for each case and explicitly derive the unramified double covering maps. In the Type I case, we take the box-pentagon-box diagram with massive propagators as our main example. This diagram has three loops and eleven massive propagators and is therefore expected to be associated with a one-dimensional algebraic variety under maximal cuts in four-dimensional spacetime, as demonstrated in \cite{Huang2013GeneraCurves,Hauenstein2015GlobalStructure}. In the Type II case, we take the non-planar crossed box as in \cite{Marzucca2024Genusdrop} to show how the reflection symmetry of spacetime leads to the genus drop. Additionally, a genus criterion for determining whether an algebraic curve is hyperelliptic is provided in the Appendix. Finally, we note that there exists a non-hyperelliptic curve in a three-loop Feynman integral that does not follow any of the known genus-drop patterns.

\section{Type I}
Take the box-pentagon-box diagram with massless external legs and equal-mass propagators as an example to demonstrate how the genus drop happens.
\begin{figure}[h]
\begin{tikzpicture}[x=0.75pt,y=0.75pt,yscale=-1,xscale=1]
\draw  [line width=1.5]  (278.45,74.79) -- (264.79,116.02) -- (221.36,115.77) -- (208.18,74.39) -- (243.46,49.06) -- cycle ;
\draw    (130.33,85.33) -- (169.46,86.68) ;
\draw    (182.59,128.02) -- (146.33,150.33) ;
\draw    (317.1,87.28) -- (360.33,93.33) ;
\draw    (303.77,128.55) -- (341.33,153.33) ;
\draw    (243.46,49.06) -- (243.33,16.33) ;
\draw [line width=1.5]    (169.46,86.68) -- (182.59,128.02) ;
\draw [line width=1.5]    (169.46,86.68) -- (208.18,74.39) ;
\draw [line width=1.5]    (317.1,87.28) -- (303.77,128.55) ;
\draw [line width=1.5]    (317.1,87.28) -- (278.45,74.79) ;
\draw [line width=1.5]    (303.77,128.55) -- (264.79,116.02) ;
\draw [line width=1.5]    (182.59,128.02) -- (221.36,115.77) ;
\draw (203.97,125.29) node [anchor=north west][inner sep=0.75pt]    {$l_{1}$};
\draw (277,124.4) node [anchor=north west][inner sep=0.75pt]    {$l_{2}$};
\draw (239,120.4) node [anchor=north west][inner sep=0.75pt]    {$l_{3}$};
\draw (159,145.4) node [anchor=north west][inner sep=0.75pt]    {$p_{1}$};
\draw (142.33,90.73) node [anchor=north west][inner sep=0.75pt]    {$p_{2}$};
\draw (250.33,21.73) node [anchor=north west][inner sep=0.75pt]    {$p_{3}$};
\draw (329.72,95.71) node [anchor=north west][inner sep=0.75pt]    {$p_{4}$};
\draw (313,147.4) node [anchor=north west][inner sep=0.75pt]    {$p_{5}$};
\draw (199,94.4) node [anchor=north west][inner sep=0.75pt]    {$m$};
\end{tikzpicture}
\caption{The box-pentagon-box diagram with massive internal lines}
\label{fig:bpb}
\end{figure}

The propagators are given by
\begin{equation}
    \begin{aligned}[b]
        &D_1=l_1^2-m^2,\quad D_2=(l_1-p_1)^2-m^2,\\
        &D_3=(l_1-p_1-p_2)^2-m^2,\quad D_4=l_2^2-m^2,\\
        &D_5=(l_2-p_5)^2-m^2,\quad D_6=(l_2-p_5-p_4)^2-m^2,\\
        &D_7=l_3^2-m^2,\quad D_8=(l_3-p_1-p_2)^2-m^2,\\
        &D_9=(l_3-p_1-p_2-p_3)^2-m^2,\\
        &D_{10}=(l_1-l_3)^2-m^2,\quad D_{11}=(l_2+l_3)^2-m^2.
    \end{aligned}
\end{equation}
Consider its four-dimensional geometry in both the momentum representation and the loop-by-loop Baikov representation \cite{Frellesvig2017CutsFeynman}.

\subsection{Momentum representation}
In the momentum representation, use the method in \cite{Huang2013GeneraCurves}, where the associated algebraic curve is the variety defined by the $4L-1$ cut equations in $\mathbb{CP}^{4L}$, where $L$ is the loop number. The spinor helicity parameterization is
\begin{align}
    &l_1=a_{1,1}p_1+a_{1,2}p_2+\frac{a_{1,3}}{2} \langle p_1|\gamma|p_2]+\frac{a_{1,4}}{2} \langle p_2|\gamma|p_1],\\
    &l_2=a_{2,1}p_5+a_{2,2}p_4+\frac{a_{2,3}}{2} \langle p_5|\gamma|p_4]+\frac{a_{2,4}}{2} \langle p_4|\gamma|p_5],\\
    &l_3=a_{3,1}K_1+a_{3,2}K_2+\frac{a_{3,3}}{2} \langle K_1|\gamma|K_2]+\frac{a_{3,4}}{2} \langle K_2|\gamma|K_1],\label{eq:l3}
\end{align}
where
\begin{align}
    &K_1=p_3,\\
    &K_2=p_1+p_2-\frac{s_{12}}{s_{13}+s_{23}}p_3.
\end{align}
This parameterization is inspired by \cite{Badger2012HeptaCutsTwoLoop}. When the external legs attached to a loop are not massless (in this case $p_1+p_2$ can be treated as an external leg attached to the loop $l_3$), the definitions of $K_1$ and $K_2$ keep the cut solution as simple as in the case with massless external legs.

Substituting the parameterization into the first nine cut equations $\{D_1=0,\dots,D_9=0\}$, the solution is as follows:
\begin{align}
    &l_1=p_1+\frac{t_1}{2}\langle p_1|\gamma|p_2]-\frac{1}{t_1}\frac{m^2}{s_{12}}\langle p_2|\gamma|p_1],\\
    &l_2=p_5+\frac{t_2}{2}\langle p_5|\gamma|p_4]-\frac{1}{t_2}\frac{m^2}{s_{45}}\langle p_4|\gamma|p_5],\\
    &l_3=K_2+\frac{t_3}{2}\langle K_1|\gamma|K_2]-\frac{1}{t_3}\frac{m^2}{s_{13}+s_{23}}\langle K_2|\gamma|K_1],
\end{align}
where we have renamed $a_{1,3}, a_{2,3}, a_{3,3}$ to $t_1,t_2,t_3$ for convenience. Then, the algebraic curve is defined by the intersection of the hypersurfaces defined by the common zeros of the remaining cut propagators $D_{10}(t_1,t_3)$ and $D_{11}(t_2,t_3)$. Use $f_1(t_1,t_3)$ and $f_2(t_2,t_3)$ to denote the numerators of $D_{10}(t_1,t_3)$ and $D_{11}(t_2,t_3)$ respectively, then the curve is defined by
\begin{equation}\label{eq:mom}
    \begin{cases}
        \,f_1(t_1,t_3)=0,\\
        \,f_2(t_2,t_3)=0.
    \end{cases}
\end{equation}
The curve is denoted as $\mathcal{C}_\text{mom.}=V(f_1,f_2)$. In what follows, by a curve we always mean the smooth projective normalization (compact Riemann surface) of the corresponding affine algebraic curve.

We use the Riemann--Hurwitz formula (Theorem~\ref{RH} in the Appendix) to find its genus. First, note that both $f_1=0$ and $f_2=0$ define elliptic curves in their respective subspaces.  Calculating the discriminants of $f_1$ and $f_2$ (both are quadratic polynomials) with respect to any variable, it turns out that all the discriminants are degree-$4$ polynomials with no multiple roots. Since $f_2$ is quadratic in $t_2$, for each $(t_1,t_3)\in V(f_1)$, there exist two points $\{(t_1,t_2^{(+)},t_3), (t_1,t_2^{(-)},t_3)\} \in V(f_1,f_2)$ except at a finite set of points (where the discriminant of $f_2$ with respect to $t_2$ becomes zero). Therefore, $\mathcal{C}_\text{mom.}$ is a ramified double cover of the elliptic curve $V(f_1)$. 

A curve that is a (ramified) double cover of an elliptic curve is called \emph{bielliptic} \cite{Arbarello1985GeometryAlgebraic} or \emph{elliptic-hyperelliptic} \cite{Accola1994TopicsTheory}. This property is helpful in two ways. First, it can be used to determine the genus once the number of ramification points in $V(f_1,f_2)$ is known; second, it can be used to determine the hyperellipticity by Corollary~\ref{coro} in the Appendix, which says that a bielliptic curve of genus greater than $3$ {\it cannot} be hyperelliptic. 

To count the ramification points, view $\mathcal{C}_\text{mom.}$ as a double cover of $V(f_1)$ via the projection $(t_1,t_2,t_3)\mapsto(t_1,t_3)$. Ramification occurs where $\Delta_{t_2}(f_2)=0$. This discriminant is a degree-$4$ polynomial in $t_3$ with four distinct roots, each produces two ramification points (since $f_1$ is quadratic in $t_1$) as long as $\Delta_{t_1}(f_1)$ and $\Delta_{t_2}(f_2)$ have no common zero, hence $8$ in total, each with ramification index $2$. Using Riemann--Hurwitz formula \eqref{eq:RHformula}, the genus of $\mathcal{C}_\text{mom.}$ is
\begin{equation}
    g(\mathcal{C}_\text{mom.})=\frac{2(2g(V(f_1))-2)+8\times(2-1)+2}{2}=5.
\end{equation}
Since it is greater than $3$, the curve $\mathcal{C}_\text{mom.}$ is not hyperelliptic according to Corollary~\ref{coro}.

\subsection{Loop-by-loop Baikov representation}
In the loop-by-loop Baikov representation, the usual way is to follow the strategies in \cite{Frellesvig2025LoopbyLoopBaikov,Bargiela2025spectrumFeynmanintegral} to derive a loop-by-loop representation that reduces the number of integration variables as much as possible, then take extra residues over irreducible scalar products (ISPs) to further reduce the number of integration variables to one (this can be done by the \verb|LeadingSingularity| command in the \textsc{DlogBasis} \cite{Henn2020Constructingdlog} package). After that one needs to rationalize square roots produced by taking extra residues, which can be done by using the package \textsc{RationalizeRoots} \cite{Besier2020RationalizeRoots}. Then a maximal cut integral is obtained in the form
\begin{equation}\label{eq:LBL-LS}
    \text{MaxCut}[I_{\text{LBL}}^{\text{BPB}}]\sim \int \frac{\dd t}{g(t)\sqrt{f_{8}(t)}},
\end{equation}
where $f_8(t)$ is a degree-8 polynomial of $t$ with no multiple roots, and $g(t)$ is a rational function. Therefore, the curve associated with it is defined by $y^2=f_8(t)$, which is of genus $3$. However, with this approach the expression of the curve becomes very complicated, and is not helpful for seeing the relation to the genus-$5$ curve derived above. 

Therefore we prefer to derive the curve in an alternative way as follows. This approach is still the loop-by-loop representation, except that the last loop momentum is parameterized by spinor helicity to avoid introducing ISPs.

Specifically, we construct the loop-by-loop Baikov kernels only for $l_1$ and $l_2$, and then substitute the parameterization of $l_3$ in Eq.~\eqref{eq:l3} into the remaining scalar products to get
\begin{multline}
    I_{\text{LBL-mom.}}^{\text{BPB}}\sim \int\frac{\dd{z_1}\wedge\dots\wedge\dd{z_6}\wedge\dd{z_{10}}\wedge\dd{z_{11}}}{\sqrt{G(l_1,l_3,p_1,p_2)}\sqrt{G(l_2,l_3,p_4,p_5)}}\\
    \times\frac{\dd{a_{3,1}}\dd{a_{3,2}}\dd{t_3}\dd{a_{3,4}}}{z_1\dots z_{11}},
\end{multline}
then the maximal cut is
\begin{multline}\label{eq:BPB-LBL-mom}
    \text{MaxCut}[I_{\text{LBL-mom.}}^{\text{BPB}}]\sim \\
    \int \frac{\dd t_3}{t_3\sqrt{G(l_1,l_3,p_1,p_2)}\sqrt{G(l_2,l_3,p_4,p_5)}|_\text{MaxCut}},
\end{multline}
where $G(l_1,l_3,p_1,p_2)G(l_2,l_3,p_4,p_5)|_\text{MaxCut}$
is a rational function of $t_3$, whose denominator is a perfect square, and the numerator is a degree-$8$ polynomial with no multiple roots. Therefore, the curve associated with it is again a hyperelliptic curve of genus $3$, and we denote it by $\mathcal{C}_\text{LBL}$. It can be verified to be isomorphic to the curve associated with Eq.~\eqref{eq:LBL-LS} by comparing their absolute Shioda invariants \cite{Shioda1967GradedRing,Shaska2014RemarksHyperelliptic}, using the command \verb|ShaskaInvariants| in the package provided by \cite{Marzucca2024Genusdrop}.

\subsection{The unramified double covering}
The curves derived in the two representations have different genera and therefore cannot be isomorphic. Moreover, the curve $\mathcal{C}_\text{mom.}$ derived in momentum representation is not hyperelliptic, which means it cannot be written in the form $y^2=f(x)$, so it is not easy to use the algorithm provided by \cite{Marzucca2024Genusdrop}, looking for an extra involution to resolve this discrepancy. We will show that an unramified double covering accounts for this genus drop. 

Suppose that there is a holomorphic map between the two curves $h:\mathcal{C}_\text{mom.}\to\mathcal{C}_\text{LBL}$, then it must satisfy the Riemann--Hurwitz formula \eqref{eq:RHformula}:
\begin{equation}
    2g(\mathcal{C}_\text{mom.})-2=n_h\left(2g(\mathcal{C}_\text{LBL})-2\right)+d,
\end{equation}
where $n_{h}>1$ is the degree of $h$, and $d$ is a non-negative integer. With $g(\mathcal{C}_\text{mom.})=5$ and $g(\mathcal{C}_\text{LBL})=3$, the only possible solution for $n_h$ and $d$ is $n_h=2$ and $d=0$,
which means if such a map exists, it must be an {\it unramified} double covering. The existence of such a map will be proven by the construction. 

To see the unramified double covering explicitly, list the analytic expressions of $f_1(t_1,t_3)$ and $G(l_1,l_3,p_1,p_2)$:
\begin{widetext}
\begin{equation}
    \begin{aligned}[b]
        f_1(t_1,t_3)=&\,m^2 t_3^2 \langle 2 3\rangle ^2 [1 2]+t_3 \langle 2 3\rangle
                     \left(-t_1 t_3 \langle 1 3\rangle  [1 2]+m^2 [1 3]\right)s_{12}-m^2 \langle 1 2\rangle  \left(m^2 [1 3]^2+t_1 [1 3] [2
                     3] s_{12}+t_1^2 [2 3]^2 s_{12}\right)\\
                     &+t_1 t_3 s_{12} \left(t_1 t_3 \langle 1
                     3\rangle ^2 [1 2]-t_1 \langle 1 3\rangle  [2 3] s_{12}+m^2
                     s_{13}+\left(m^2-s_{12}\right) s_{23}\right),
    \end{aligned}
\end{equation}
\begin{equation}
    \begin{aligned}[b]
         G(l_1,l_3,p_1,p_2)=&-\frac{1}{16}s_{12}\left[(16 m^2-4s_{12}) (l_3\cdot p_1)^2-4 m^2 s_{12} (l_3\cdot p_1)+3 m^4 s_{12}\right]\\
        =&\frac{s_{12}}{16t_3^2(s_{13}+s_{23})^2} \Big[t_3^4 \langle 1 3\rangle ^2 \langle 2 3\rangle ^2 [1 2]^2\left(-4 m^2+s_{12}\right)+m^4 \langle 1 2\rangle ^2 [1 3]^2 [2 3]^2 \left(-4m^2+s_{12}\right)\\
         &+2 t_3^3 \langle 1 3\rangle  \langle 2 3\rangle  [1 2]s_{12} \left(m^2 s_{13}+\left(-3 m^2+s_{12}\right) s_{23}\right)+2 m^2 t_3 \langle 1 2\rangle  [1 3] [2 3] s_{12} \left(m^2 s_{13}+\left(-3 m^2+s_{12}\right) s_{23}\right)\\
         &+t_3^2 s_{12} \left(-3 m^4 s_{13}^2+2m^4 s_{13} s_{23}+\left(-3 m^4-2 m^2 s_{12}+s_{12}^2\right) s_{23}^2\right)\Big].
    \end{aligned}
\end{equation}
\end{widetext}
Then, it is straightforward to verify that
\begin{equation}\label{eq:DiscG1}
    \frac{\Delta_{t_1}(f_1)}{16t_3^2(s_{13}+s_{23})^2}=G(l_1,l_3,p_1,p_2).
\end{equation}
Note that the denominator of $G(l_1,l_3,p_1,p_2)$ is indeed a perfect square. In the same way, one can also verify that
\begin{equation}\label{eq:DiscG2}
    \frac{\Delta_{t_2}(f_2)}{16t_3^2(s_{35}+s_{34})^2}=G(l_2,l_3,p_4,p_5).
\end{equation}
Therefore, the curve $\mathcal{C}_\text{LBL}$ is actually defined by
\begin{equation}\label{eq:LBLDisc}
    y^2=\Delta_{t_1}(f_1)\Delta_{t_2}(f_2).
\end{equation}
Based on Eq.~\eqref{eq:LBLDisc} and Eq.~\eqref{eq:mom}, one can construct the double covering from $\mathcal{C}_\text{mom.}$ to $\mathcal{C}_\text{LBL}$ explicitly.

Both $f_1$ and $f_2$ are quadratic in their first variable:
\begin{align}
    f_1(t_1,t_3)=A_1(t_3)t_1^2+B_1(t_3)t_1+C_1(t_3),\label{f1}\\
    f_2(t_2,t_3)=A_2(t_3)t_2^2+B_2(t_3)t_2+C_2(t_3),\label{f2}
\end{align}
Introduce the quantities
\begin{align}
    y_1(t_1,t_3)=2A_1(t_3)t_1+B_1(t_3),\label{y1}\\
    y_2(t_2,t_3)=2A_2(t_3)t_2+B_2(t_3),\label{y2}
\end{align}
which satisfy $y_i^2=\Delta_{t_i}(f_i)$ when $f_i(t_i,t_3)=0$. For a fixed $t_3$, denote the two roots of $f_i(t_i,t_3)$ by $t_i^{(+)}$ and $t_i^{(-)}$, then $y_i(t_i^{(+)},t_3)=-y_i(t_i^{(-)},t_3)$. Construct a rational map $\mathcal{C}_\text{mom.} \to \mathcal{C}_\text{LBL}$ on the affine chart as follows:
\begin{equation}\label{eq:bpb-double-covering}
  (t_1,t_2,t_3) \mapsto (t_3,y_1 y_2),
\end{equation}
which extends uniquely to a degree-$2$ holomorphic map between the curves \cite{Hartshorne1977AlgebraicGeometry}, and is unramified by the Riemann--Hurwitz formula.

Why is there an unramified double covering in the momentum representation? The answer turns out to be related to how we apply the cuts on propagators $D_{10}$ and $D_{11}$. To see that, first write down the cut integral in momentum representation after applying cuts on $D_1,\dots,D_9$:
\begin{equation}
    \text{Cut}_{1,\dots,9}[I_\text{mom.}^\text{BPB}]\sim \int \frac{t_3\dd{t_1}\dd{t_2}\dd{t_3}}{f_1(t_1,t_3)f_2(t_2,t_3)}.
\end{equation}
Take the residues with respect to $t_1$ at $f_1(t_1,t_3)=0$ and $t_2$ at $f_2(t_2,t_3)=0$, then we obtain
\begin{equation}
    \text{MaxCut}[I_\text{mom.}^\text{BPB}]\sim \int \frac{t_3\dd{t_3}}{\sqrt{\Delta_{t_1}(f_1)}\sqrt{\Delta_{t_2}(f_2)}}.
\end{equation}
Substituting \eqref{eq:DiscG1} and \eqref{eq:DiscG2} into it, we see that this is the same as Eq.~\eqref{eq:BPB-LBL-mom}, and it involves the genus-$3$ curve $\mathcal{C}_\text{LBL}$. However, if we first take residue with respect to $t_3$ at $f_1(t_1,t_3)=0$, and then $t_2$ at $f_2(t_2,t_3(t_1))=0$, we will get
\begin{equation}
    \int \frac{t_3(t_1)\dd{t_1}}{\sqrt{\Delta_{t_3}(f_1)}\sqrt{\Delta_{t_2}(f_2)}}.
\end{equation}
Note that $\Delta_{t_2}(f_2)$ is a polynomial in $t_3$, but $t_3$ is algebraic in $t_1$ and cannot be rationalized. That is to say, we have eliminated $t_3$ using $f_1(t_1,t_3)=0$ and $f_2(t_2,t_3)=0$, and the result is exactly the intersection of $f_1$ and $f_2$ that generates $\mathcal{C}_\text{mom.}$ in Eq.~\eqref{eq:mom}. Therefore, if one simply uses the intersection of all cut equations as the definition of the curve, one is doing the maximal cut in a way that obscures the simpler geometry, which leads to an unramified double cover with a higher genus.

\subsection{Period relations}
The analytic calculation of Feynman integrals involves the periods of curves found in maximal cuts \cite{Weinzierl2022FeynmanIntegrals}. Periods of compact Riemann surfaces are integrals of holomorphic differentials along contours that cannot be contracted continuously into a point. It turns out that even if we apply the maximal cut in a suboptimal way and get a higher genus, it does not mean that the result is wrong, because the periods of $\mathcal{C}_\text{mom.}$ and $\mathcal{C}_\text{LBL}$ are linearly related as a result of the unramified double covering via the theory of abelian varieties \cite{Birkenhake2004ComplexAbelian}.

For a compact Riemann surface of genus $g$, let $\{\omega_1,\dots,\omega_{g}\}$ be a basis of holomorphic differentials and let $\{\Gamma_1,\dots,\Gamma_{2g}\}$ be a basis of the first homology group, corresponding to the independent contours. Let $\Lambda\subset \mathbb{C}^{g}$ be the lattice spanned by the $2g$ vectors
\begin{equation}
v_i=(\int_{\Gamma_i}\omega_1,\dots,\int_{\Gamma_i}\omega_g).
\end{equation}
The quotient $\mathbb{C}^{g}/\Lambda$ is a complex torus called the Jacobian of the surface. The $g\times 2g$ matrix whose columns are $v_1,\cdots,v_{2g}$ is called the period matrix. Consider an unramified (étale in \cite{Birkenhake2004ComplexAbelian}) double covering $C\to C'$ of compact Riemann surfaces, with Jacobians $\mathrm{Jac}(C)$ and $\mathrm{Jac}(C')$. Then $\mathrm{Jac}(C')$ is an abelian subvariety of $\mathrm{Jac}(C)$, and its complementary abelian subvariety in $\mathrm{Jac}(C)$ is called the Prym variety \cite{Mumford1974PrymVarieties} of the covering $C\to C'$, denoted by $P$. Consequently, $\mathrm{Jac}(C)$ is isogenous to $\mathrm{Jac}(C')\times P$ \cite{Murty1993IntroductionAbelian}.

Applying this to the double covering of Eq.~\eqref{eq:bpb-double-covering}, we conclude that $\mathrm{Jac}(\mathcal{C}_\text{mom.})$ is isogenous to $\mathrm{Jac}(\mathcal{C}_\text{LBL})\times P_\text{BPB}$, where $P_\text{BPB}$ is the associated Prym variety. $P_\text{BPB}$ can be determined using Theorem~\ref{Mumford}.
To this end, we introduce two elliptic curves $E_1$ and $E_2$: the branch points of the elliptic double covering $E_1\to\mathbb{CP}^1$ are the zeros of $\Delta_{t_1}(f_1)$, and those of $E_2\to\mathbb{CP}^1$ are the zeros of $\Delta_{t_2}(f_2)$. Then $\mathcal{C}_\text{mom.}$ is isomorphic to the normalization of $E_1\times_{\mathbb{CP}^1}E_2$. To prove it, it suffices to show that $\mathcal{C}_\text{mom.}$ is birationally equivalent to the normalization of $E_1\times_{\mathbb{CP}^1}E_2$, as both curves are already smooth.

The function field of $\mathcal{C}_\text{mom.}$ is $\mathbb{C}(t_1,t_2,t_3)$ with $f_1(t_1,t_3)=0$ and $f_2(t_2,t_3)=0$; the function field of $E_1\times_{\mathbb{CP}^1}E_2$ is $\mathbb{C}(t_3,y_1,y_2)$ with $y_1^2=\Delta_{t_1}(f_1)$ and $y_2^2=\Delta_{t_2}(f_2)$. Using Eq.~\eqref{f1}--\eqref{y2}, define a rational map $E_1\times_{\mathbb{CP}^1}E_2\to\mathcal{C}_\text{mom.}$ by
\begin{equation}
    (t_3,y_1,y_2)\mapsto\left(\frac{-B_1(t_3)+y_1}{2A_1(t_3)},\frac{-B_2(t_3)+y_2}{2A_2(t_3)},t_3\right),
\end{equation}
and another rational map $\mathcal{C}_\text{mom.}\to E_1\times_{\mathbb{CP}^1}E_2$ by
\begin{equation}
    (t_1,t_2,t_3)\mapsto\left(y_1(t_1,t_3),y_2(t_2,t_3),t_3\right).
\end{equation}
It is then straightforward to verify that both maps are well-defined and inverses of each other except at a finite set of points. Consequently, $\mathcal{C}_\text{mom.}$ and the normalization of $E_1\times_{\mathbb{CP}^1}E_2$ are birationally equivalent, hence isomorphic. According to Theorem~\ref{Mumford}, we have $P_\text{BPB}\cong\mathrm{Jac}(E_1)\times\mathrm{Jac}(E_2)$. As $\mathrm{Jac}(\mathcal{C}_\text{mom.})$ is isogenous to $\mathrm{Jac}(\mathcal{C}_\text{LBL})\times P_\text{BPB}$ by definition, we conclude that $\mathrm{Jac}(\mathcal{C}_\text{mom.})$ is isogenous to $\mathrm{Jac}(E_1)\times\mathrm{Jac}(E_2)\times\mathrm{Jac}(\mathcal{C}_\text{LBL})$. Therefore, after suitable choices of bases for holomorphic differentials and homology cycles, the period matrices satisfy a linear relation as follows \cite{Birkenhake2004ComplexAbelian}: we denote the period matrices of $\mathcal{C}_\text{mom.},\mathcal{C}_\text{LBL},E_1$ and $E_2$ by $\mathcal{P}_\text{mom.}\in\mathbb{C}^{5\times 10},\mathcal{P}_\text{LBL}\in\mathbb{C}^{3\times 6},\mathcal{P}_1\in\mathbb{C}^{1\times 2}$ and $\mathcal{P}_2\in\mathbb{C}^{1\times 2}$ respectively, then there exist matrices $M\in \mathbb{C}^{5\times 5}$ and $N\in\mathbb{Z}^{10\times 10}$ such that
\begin{equation}\label{eq:periods-BPB}
    M\mathcal{P}_\text{mom.}=
    \begin{pmatrix}
        \mathcal{P}_1 & 0 & 0\\
        0 & \mathcal{P}_2 & 0\\
        0 & 0 & \mathcal{P}_\text{LBL}
    \end{pmatrix}
    N.
\end{equation}

\subsection{More three-loop examples}
The above genus drop mechanism exists in some other three-loop Feynman diagrams with internal masses as well. The genera can be derived in a similar way as before. We summarize them with the box-pentagon-box diagram in Table~\ref{table:genera}, which all belong to the case Type I. The genera in momentum representation agree with the results of \cite{Huang2013GeneraCurves,Hauenstein2015GlobalStructure}, and in this representation these curves are non-hyperelliptic.

\begin{table}[h!]
\centering
\begin{tblr}{colspec={|Q[c,m]|Q[c,m]|Q[c,m]|Q[c,m]|},colsep=2pt}
\hline
&
\begin{tikzpicture}[x=0.75pt,y=0.75pt,yscale=-1,xscale=1,scale=0.35,baseline={([yshift=-3pt]current bounding box.center)}]
\draw  [line width=1.0]  (278.45,74.79) -- (264.79,116.02) -- (221.36,115.77) -- (208.18,74.39) -- (243.46,49.06) -- cycle ;
\draw [line width=0.5]    (130.33,85.33) -- (169.46,86.68) ;
\draw [line width=0.5]    (182.59,128.02) -- (146.33,150.33) ;
\draw [line width=0.5]    (317.1,87.28) -- (360.33,93.33) ;
\draw [line width=0.5]    (303.77,128.55) -- (341.33,153.33) ;
\draw [line width=0.5]    (243.46,49.06) -- (243.33,16.33) ;
\draw [line width=1.0]    (169.46,86.68) -- (182.59,128.02) ;
\draw [line width=1.0]    (169.46,86.68) -- (208.18,74.39) ;
\draw [line width=1.0]    (317.1,87.28) -- (303.77,128.55) ;
\draw [line width=1.0]    (317.1,87.28) -- (278.45,74.79) ;
\draw [line width=1.0]    (303.77,128.55) -- (264.79,116.02) ;
\draw [line width=1.0]    (182.59,128.02) -- (221.36,115.77) ;
\end{tikzpicture}
&
\begin{tikzpicture}[x=0.75pt,y=0.75pt,yscale=-1,xscale=1,scale=0.4,baseline={([yshift=-3pt]current bounding box.center)}]
\draw [line width=1.0]    (230,110) -- (230.33,170.67) ;
\draw [line width=1.0]    (230.33,170.67) -- (291.33,170.67) ;
\draw [line width=1.0]    (188.33,110.67) -- (230,110) ;
\draw [line width=1.0]    (188.67,171.33) -- (230.33,170.67) ;
\draw [line width=1.0]    (188.33,110.67) -- (188.67,171.33) ;
\draw [line width=1.0]    (273.33,139.67) -- (291.33,170.67) ;
\draw [line width=1.0]    (230,110) -- (260.33,79.67) ;
\draw [line width=1.0]    (260.33,79.67) -- (291,110) ;
\draw [line width=1.0]    (273.33,139.67) -- (291,110) ;
\draw [line width=1.0]    (291.33,170.67) -- (309,141) ;
\draw [line width=1.0]    (291,110) -- (309,141) ;
\draw [line width=0.5]    (247.33,139.67) -- (273.33,139.67) ;
\draw [line width=0.5]    (160.33,94.67) -- (188.33,110.67) ;
\draw [line width=0.5]    (161.33,182.67) -- (188.67,171.33) ;
\draw [line width=0.5]    (260.33,52.67) -- (260.33,79.67) ;
\draw [line width=0.5]    (309,141) -- (335,141) ;
\end{tikzpicture}
&
\begin{tikzpicture}[x=0.75pt,y=0.75pt,yscale=-1,xscale=1,scale=0.4,baseline={([yshift=-3pt]current bounding box.center)}]
\draw [line width=1.0]    (250.33,190.67) -- (311.33,190.67) ;
\draw [line width=1.0]    (293.33,159.67) -- (311.33,190.67) ;
\draw [line width=1.0]    (250,130) -- (280.33,99.67) ;
\draw [line width=1.0]    (280.33,99.67) -- (311,130) ;
\draw [line width=1.0]    (293.33,159.67) -- (311,130) ;
\draw [line width=1.0]    (311.33,190.67) -- (329,161) ;
\draw [line width=1.0]    (311,130) -- (329,161) ;
\draw [line width=0.5]    (280.33,72.67) -- (280.33,99.67) ;
\draw [line width=0.5]    (329,161) -- (346.33,143) ;
\draw [line width=0.5]    (276,177.67) -- (293.33,159.67) ;
\draw [line width=1.0]    (232.33,159.67) -- (250,130) ;
\draw [line width=1.0]    (232.33,159.67) -- (250.33,190.67) ;
\draw [line width=1.0]    (250,130) -- (268,161) ;
\draw [line width=0.5]    (268,161) -- (285.33,143) ;
\draw [line width=1.0]    (250.33,190.67) -- (268,161) ;
\draw [line width=0.5]    (215,177.67) -- (232.33,159.67) ;
\end{tikzpicture}
\\
\hline
momentum & 5 & 9 & 13 \\
\hline
loop-by-loop & 3 & 5 & 7 \\
\hline
\end{tblr}
\caption{Genera of some three-loop diagrams}
\label{table:genera}
\end{table}

\section{Type II}
We now analyze the genus drop between hyperelliptic curves, taking the non-planar crossed box \cite{Georgoudis2015TwoloopIntegral,Marzucca2024Genusdrop} as an example, shown in Fig.~\ref{fig:NPCB}. By expressing the ISPs in terms of the spherical coordinates in the Baikov representation, we will make manifest the extra involution found by \cite{Marzucca2024Genusdrop} and its relation to the discrete Lorentz symmetries.

\begin{figure}[h]
\begin{tikzpicture}[x=0.75pt,y=0.75pt,yscale=-1,xscale=1]
\draw  [line width=1.5]  (244.48,102.29) -- (279.84,137.65) -- (244.48,173) -- (209.13,137.65) -- cycle ;
\draw    (119,69) -- (154.33,103) ;
\draw    (155.33,173) -- (121.33,203) ;
\draw    (178.33,137) -- (209.65,137.52) ;
\draw    (279.84,137.65) -- (319.82,138.16) ;
\draw [line width=1.5]    (154.33,103) -- (244.48,102.29) ;
\draw [line width=1.5]    (154.33,103) -- (155.33,173) ;
\draw [line width=1.5]    (155.33,173) -- (244.48,173) ;
\draw (194,181.4) node [anchor=north west][inner sep=0.75pt]    {$l_{1}$};
\draw (265,160.4) node [anchor=north west][inner sep=0.75pt]    {$l_{2}$};
\draw (140.33,191.4) node [anchor=north west][inner sep=0.75pt]    {$p_{1}$};
\draw (140,66.4) node [anchor=north west][inner sep=0.75pt]    {$p_{2}$};
\draw (301.83,141.3) node [anchor=north west][inner sep=0.75pt]    {$p_{3}$};
\draw (180.33,140.4) node [anchor=north west][inner sep=0.75pt]    {$p_{4}$};
\draw (134,130.4) node [anchor=north west][inner sep=0.75pt]    {$m$};
\end{tikzpicture}
\caption{The non-planar crossed box in \cite{Marzucca2024Genusdrop}}
\label{fig:NPCB}
\end{figure}

In an approach similar to \cite{Mastrolia2016AdaptiveIntegrand}, the loop-by-loop Baikov representation can be written in the form
\begin{equation}\label{eq:NPCB}
    I^{\text{NPCB}}_\text{angular}\sim\int \frac{\dd{z_1}\wedge\dots\wedge\dd{z_7}}{\sqrt{G(l_2,p_3,p_4,l_1)}}\frac{\dd{\theta}}{z_1\dots z_7}.
\end{equation}
To derive it, we first recall the derivation of the Baikov representation \cite{Frellesvig2025LoopbyLoopBaikov}. For each loop, the loop momentum $l$ is split to a parallel space component $l^{\parallel}$ and a perpendicular space component $l^{\perp}$. The parallel space is spanned by the external momenta, denoted by $\{q_1,\dots,q_E\}$. Then, the integration of the parallel component is transformed to the scalar products $\{l\cdot q_1,\dots,l\cdot q_{E}\}$. In the perpendicular space, the integration is over the norm $|l^{\perp}|=\sqrt{-(l^{\perp})^2}$ and the spherical coordinates $\Omega_{D-E-1}$. Usually, the parallel space should include all the ISPs, so that the spherical coordinates can be integrated to a constant. In this way, the Jacobian (also called the Baikov kernel) is
\begin{equation}
G(l,q_1,\dots,q_{E})^{\frac{D-E-2}{2}}G(q_1,\dots,q_{E})^{-\frac{D-E-1}{2}}.
\end{equation}
Applying this to the $l_2$ loop in Fig.~\ref{fig:NPCB}, we get the kernel $G(l_2,p_3,p_4,l_1)^{-1/2}$ in \eqref{eq:NPCB} under $D\to 4$.

However, here we introduce a trick to leave only the propagators in the parallel space and use spherical coordinates to represent the ISPs. For example, in the non-planar crossed box diagram, the ISP associated with $l_1$ is just $l_1\cdot p_4$. We can set $l_{1}^{\parallel}$ to lie in the span of $\{p_1,p_2\}$, then the spherical coordinate is one-dimensional, denoted by $\theta$. Then $l_1\cdot p_4$ can be represented by
\begin{equation}
    \begin{aligned}[b]
        l_1\cdot p_4=&l_{1}^{\parallel}\cdot p_{4}^{\parallel}+l_{1}^{\perp}\cdot p_{4}^{\perp}\\
        =&c_1 l_{1}^{\parallel}\cdot p_1+c_2  l_{1}^{\parallel}\cdot p_2+l_{1}^{\perp}\cdot p_{4}^{\perp}\\
        =&c_1 l_1 \cdot p_1+c_2 l_1 \cdot p_2+|l_{1}^{\perp}|\times|p_{4}^{\perp}|\cos\theta,
    \end{aligned}
\end{equation}
where $p_{4}^{\parallel}=c_1 p_1+c_2 p_2$. In this case, the exponent of the Baikov kernel $G(l_1,p_1,p_2)$ goes to zero under $D\to 4$, and the integration variables are $z_5,z_6,z_7$ and $\theta$. This is how we derive Eq.~\eqref{eq:NPCB}.

Applying the maximal cut to Eq.~\eqref{eq:NPCB}, we obtain
\begin{equation}\label{eq:NPCB-G}
    \begin{aligned}[b]
        G(l_2,p_3,p_4,l_1)=&-\frac{1}{16} \left[3 m^2 s-2 s (l_1\cdot p_4)-4 \left(l_1\cdot p_4\right)^2\right]\\
        &\times\left[m^2 s+2 s (l_1\cdot p_4)+4 \left(l_1\cdot p_4\right)^2\right],
    \end{aligned}
\end{equation}
and
\begin{equation}\label{eq:l1p4}
    \begin{aligned}[b]
        l_1\cdot p_4=&\frac{t}{2}+\sqrt{(l_{1}^{\perp})^2 (p_{4}^{\perp})^2}(-\cos\theta)\\
        =&\frac{t}{2}-\sqrt{\frac{m^2 t (s+t)}{s}}\cos\theta,
    \end{aligned}
\end{equation}
where $s$ and $t$ are Mandelstam variables. Note that the minus sign with $\cos\theta$ comes from the fact that the metric in the perpendicular space has the negative signature (a vector that is orthogonal to a lightlike vector is either spacelike or proportional to it, thus the subspace orthogonal to a lightlike vector contains no timelike directions). With Eq.~\eqref{eq:l1p4} and Eq.~\eqref{eq:NPCB-G}, and parameterizing $\cos\theta$ by the classical trigonometric parameterization $\cos\theta\mapsto(1-x^2)/(1+x^2)$, we obtain
\begin{equation}
    G(l_2,p_3,p_4,l_1)=\frac{P_8(x)}{16 s^2 \left(x^2+1\right)^4}
\end{equation}
Since the denominator is a perfect square, the curve associated with it is determined by the numerator, $P_8(x)$. This polynomial is of degree $8$ in $x$ with no multiple roots; thus, the curve associated with it is of genus $3$, and is isomorphic to the curve found in \cite{Georgoudis2015TwoloopIntegral} (after setting the masses as the same), again verified by computing the Shioda invariants. Because $x$ comes from the parameterization for $\cos \theta$, it only appears in the form $x^2$, and naturally possesses an extra involution. Now we can transform the integration variable from $x$ to $w=-x^2$,  to obtain
\begin{equation}
    \text{MaxCut}[I^{\text{NPCB}}_\text{angular}]\sim\int\frac{1-w}{\sqrt{P_5(w)}}\dd{w},
\end{equation}
where
$P_5(w)=-wP_8(\sqrt{-w})$ is a degree-$5$ polynomial in $w$ with no multiple roots.
This is exactly the genus-dropped $Q_5$ found in \cite{Marzucca2024Genusdrop} by searching for extra involution, which represents a genus-2 hyperelliptic curve.

There is also an unramified double covering from curve $H:y^2=P_8(x)$ to curve $H':u^2=P_5(w)$, given by the unique extension \cite{Hartshorne1977AlgebraicGeometry} of the following rational map:
\begin{equation}
  (x,y)\mapsto (w=-x^2,u=xy),
\end{equation}
which is akin to the $\rho_2$ in \cite{Marzucca2024Genusdrop}. 
Split the six branch points (one is at infinity) of $y^2=P_5(w)$ into two disjoint subsets denoted by $V_1=\{0,\infty\}$ and $V_2=\text{zeros of }R_4(w)$ where $R_4(w)=-P_5(w)/w$. Associated with them, define two double covers of $\mathbb{CP}^1$, denoted by $F_1:y_1^2=-w$ and $F_2:y_2^2=R_4(w)$, then $H:y^2=P_8(x)$ is isomorphic to the normalization of $F_1\times_{\mathbb{CP}^1}F_2$. Again verifying the birational equivalence is sufficient to prove it. Define the rational maps as follows:
\begin{align}
    (x,y)&\mapsto (w=-x^2,y_1=x,y_2=y),\\
    (w,y_1,y_2)&\mapsto (x=y_1,y=y_2),
\end{align}
which are inverses of each other. Therefore, $\mathrm{Jac}(F_2)$ is the Prym variety associated with the unramified double covering $H\to H'$ according to Theorem~\ref{Mumford}. Then there is an isogeny between $\mathrm{Jac}(H)$ and $\mathrm{Jac}(F_2)\times\mathrm{Jac}(H')$, and a period matrix relation similar to Eq.~\eqref{eq:periods-BPB}.

This genus drop comes from the fact that there is only one ISP $l_1\cdot p_4$, so there is no need to introduce $\sin\theta$, which makes the trigonometric function parameter $x$ only appear as $x^2$. This is a consequence of the external momenta spanning only a three-dimensional subspace. Indeed, there exists a reflection symmetry
\begin{equation}
    \mathcal{R}: v^\parallel \mapsto v^\parallel,\quad v^\perp \mapsto -v^\perp,
\end{equation}
where $v^\parallel \in \operatorname{span}(p_1,p_2,p_3,p_4)$ and $v^\perp$ lies in the orthogonal direction. 
$\mathcal{R}$ sends the spherical coordinate $\theta\mapsto -\theta$, while $l_1\cdot p_4$ remains invariant. This is the physical origin of the extra involution $x\mapsto -x$ that drives the genus drop. For generic kinematics with external momenta spanning the full four-dimensional spacetime, this particular reflection symmetry is absent and there is no genus drop, as observed in \cite{Bargiela2025spectrumFeynmanintegral}. This also provides an explanation from a physical point of view for the observation in \cite{Hauenstein2015GlobalStructure} that no even genus was found in three-loop diagrams. Since unramified double coverings always produce genus drops as $g\mapsto (g+1)/2$, the generic genus must be odd for those diagrams that can undergo further genus drop in special kinematics where the external legs span only a subspace.

\section{A non-hyperelliptic diagram with no known genus drop}
We point out that there are Feynman integrals that involve non-hyperelliptic curves for which no genus drop has been found. The massive pentagon-box-pentagon diagram shown in Fig.~\ref{fig:PBP} serves as an example.

\begin{figure}[ht]
\begin{tikzpicture}[x=0.75pt,y=0.75pt,yscale=-1,xscale=1]
\draw [line width=1.5]    (294.33,170) -- (322.33,195) ;
\draw  [line width=1.5]  (274.33,97) -- (319.33,97) -- (319.33,143) -- (294.33,170) -- (265.33,141) -- cycle ;
\draw [line width=1.5]    (319.33,97) -- (366.33,97) ;
\draw [line width=1.5]    (366.33,98) -- (372.33,138) ;
\draw [line width=1.5]    (319.33,144) -- (347.33,169) ;
\draw [line width=1.5]    (322.33,195) -- (347.33,169) ;
\draw [line width=1.5]    (347.33,169) -- (372.33,138) ;
\draw    (232.33,152) -- (265.33,141) ;
\draw    (274.33,98) -- (239.33,77) ;
\draw    (366.33,98) -- (396.33,77) ;
\draw    (372.33,138) -- (404.33,151) ;
\draw    (322.33,195) -- (322.33,225) ;
\draw (321,114.4) node [anchor=north west][inner sep=0.75pt]    {$m$};
\draw (266,154.4) node [anchor=north west][inner sep=0.75pt]    {$l_{1}$};
\draw (361.83,156.9) node [anchor=north west][inner sep=0.75pt]    {$l_{2}$};
\draw (295,184.4) node [anchor=north west][inner sep=0.75pt]    {$l_{3}$};
\draw (325.33,205.4) node [anchor=north west][inner sep=0.75pt]    {$p_{1}$};
\draw (236,130.4) node [anchor=north west][inner sep=0.75pt]    {$p_{2}$};
\draw (251,69.4) node [anchor=north west][inner sep=0.75pt]    {$p_{3}$};
\draw (375,67.4) node [anchor=north west][inner sep=0.75pt]    {$p_{4}$};
\draw (396,130.4) node [anchor=north west][inner sep=0.75pt]    {$p_{5}$};
\end{tikzpicture}
\caption{The pentagon-box-pentagon diagram with massive internal lines}
\label{fig:PBP}
\end{figure}

We use a parameterization similar to Eq.~\eqref{eq:BPB-LBL-mom}. We construct the Baikov kernel only for $l_3$ and parameterize $l_1$ and $l_2$ using spinor helicity to represent ISPs. Then we apply cuts on all propagators except $(l_1+l_2+p_1)^2-m^2$ to obtain
\begin{multline}
    \text{SubMaxCut}[I^{\text{PBP}}_\text{LBL}]\\
    \sim\int \frac{\dd{t_1} \dd{t_2}}{((l_1+l_2+p_1)^2-m^2)\sqrt{G(l_3,p_1,l_1,l_2)}}.
\end{multline}
Here, $t_1$ and $t_2$ are the remaining loop momentum parameters. We do not apply the maximal cut directly because the last cut equation $(l_1+l_2+p_1)^2-m^2=0$ itself produces an elliptic curve. As a result, the curve associated with this diagram is given by 
\begin{equation}\label{eq:curve-PBP}
    \begin{cases}
        y^2=G(l_3,p_1,l_1,l_2),\\
        (l_1+l_2+p_1)^2-m^2=0,
    \end{cases}
\end{equation}
with the sub-maximal cut implicitly applied. This curve is bielliptic because it is a ramified double cover of the elliptic curve $(l_1+l_2+p_1)^2-m^2=0$. Solving $G(l_3,p_1,l_1,l_2)=0$ and $(l_1+l_2+p_1)^2-m^2=0$ simultaneously for $(t_1,t_2)$ gives $16$ ramification points of ramification index $2$, including the possible points at infinity. Using the Riemann--Hurwitz formula \eqref{eq:RHformula}, we conclude that this curve is of genus $9$. According to Corollary~\ref{coro}, it is non-hyperelliptic. Unlike previous examples, we have not found a genus drop pattern to simplify it to a hyperelliptic curve. It is therefore a candidate for a genuinely non-hyperelliptic Feynman integral.

\section{Conclusion and outlook}
We have shown that both types of genus drop encountered in Feynman integrals (I) the one accompanied by a curve-type change from non-hyperelliptic to hyperelliptic, and (II) the one occurring between hyperelliptic curves---are tied to the same mathematical structure: an unramified double covering between the algebraic curves determined from maximal cuts in different representations. The period matrices of the higher-genus and lower-genus curves are linearly related via the Prym varieties associated with the unramified double coverings. The physical origins behind the two types are distinct. In the case (I), the genus drop arises from the order in which residues are taken when imposing the maximal cut. In the case (II), the genus drop is driven by a reflection symmetry with respect to the spacetime dimension orthogonal to the external momenta, which arises because the external legs span only a three-dimensional subspace.

Despite this progress, an efficient method for rapidly determining the geometry associated with an arbitrary high-loop Feynman integral is still on demand. The existence of the massive pentagon-box-pentagon diagram, whose maximal cut yields a genus-9 non-hyperelliptic curve with no apparent genus drop, illustrates that the there could be genuinely non-hyperelliptic Feynman integrals. Developing a systematic approach to extract the minimal geometric structure associated with a high-loop Feynman integral is therefore an important direction for future research.

\section*{Acknowledgments}
We acknowledge Hjalte Frellesvig, Xing Wang, Stefan Weinzierl and Yu Wu, for enlightening discussions.  JYG is supported by the National Natural Science Foundation of
China under Grant No. 12357077.
 YZ is supported by NSFC through Grant No. 12575078 and 12247103,
and would like to thank the Erwin Schr\"odinger International Institute
for Mathematics and Physics (ESI), University of Vienna (Austria), for the opportunity to participate
in the Thematic Programme “Amplitudes and Algebraic Geometry” in 2026 where a significant part
of this work has been accomplished and for the support given.

\appendix*
\section{}
\begin{theorem}[Riemann--Hurwitz, see e.g. \cite{Farkas1992RiemannSurfaces}]\label{RH}
    Let $f:X\to Y$ be a covering map of degree $n$ between compact Riemann surfaces of genera $g(X)$ and $g(Y)$ respectively. Suppose there are $m$ points in $X$ with ramification indices greater than $1$, denoted by $r_1,r_2,\dots,r_m$ respectively. Then
    \begin{equation}\label{eq:RHformula}
        2g(X)-2=n(2g(Y)-2)+\sum_{i=1}^{m}(r_i-1).
    \end{equation}
\end{theorem}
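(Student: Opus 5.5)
The plan is the standard Euler characteristic argument, run through a triangulation of $Y$. I take for granted the local normal form of a holomorphic map: near any $p\in X$ there are local coordinates $z$ centred at $p$ and $w$ centred at $f(p)$ with $w=z^{r_p}$, where $r_p\ge 1$ is the ramification index. Because $X$ is compact, the points with $r_p>1$ are isolated and hence finite in number; these are the $r_1,\dots,r_m$ of the statement. I also use that the degree $n=\sum_{p\in f^{-1}(q)} r_p$ does not depend on $q\in Y$, and that $\chi(S)=2-2g(S)$ for a compact Riemann surface $S$ of genus $g(S)$.

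First I would choose a finite triangulation of $Y$ whose vertex set contains every branch value $f(p_i)$. I would then refine it so that each closed face and each edge is contained in a simply connected open set. On such a set, with the branch values removed, $f$ restricts to an unramified covering, i.e.\ a topological covering. Next I would lift the triangulation through $f$. Each open face and each open edge avoids the branch values, so each has exactly $n$ disjoint homeomorphic preimages. Near a ramified preimage, the local model $w=z^{r}$ shows that the lifted cells fit together into a genuine triangulation of $X$. This gives $F_X=nF_Y$ and $E_X=nE_Y$.

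Vertices need more care. For a vertex $q$ of $Y$, the number of preimages is $\sum_{p\in f^{-1}(q)}1=n-\sum_{p\in f^{-1}(q)}(r_p-1)$. Summing over all vertices gives $V_X=nV_Y-\sum_{i=1}^m(r_i-1)$; this uses that every ramification point lies over a vertex. Then
\[
\chi(X)=V_X-E_X+F_X=n\,\chi(Y)-\sum_{i=1}^m(r_i-1).
\]
Substituting $\chi=2-2g$ and multiplying through by $-1$ yields \eqref{eq:RHformula}.

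I expect the main obstacle to be the lifting step, i.e.\ justifying that the preimage cells really form a triangulation of $X$, in particular near ramification points. The way I would handle it is to make the triangulation fine enough that each vertex star is a disc on which $f$ is given by $w=z^{r}$ in suitable coordinates. Each triangle in that star then has $r$ lifts arranged cyclically around $p$, which makes the incidence structure evident.
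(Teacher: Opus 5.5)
Your proposal is correct. It is the standard Euler-characteristic argument: triangulate $Y$ with the branch values among the vertices, lift the triangulation through $f$, and count $F_X=nF_Y$, $E_X=nE_Y$, $V_X=nV_Y-\sum_i(r_i-1)$. The paper gives no proof of its own and only cites Farkas--Kra, where the theorem is proved by this same triangulation argument.
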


\begin{theorem}[Castelnuovo--Severi, \cite{Castelnuovo1906Sulleserie,Accola1994TopicsTheory}]
    Suppose there are three compact Riemann surfaces $X_0,X_1,X_2$ of genera $g_0,g_1,g_2$ respectively, with two holomorphic maps $f_1:X_0\to X_1$ and $f_2:X_0\to X_2$ of degrees $n_1$ and $n_2$ respectively. Suppose further that there does not exist a compact Riemann surface $X'$ of genus $g' < g_0$ with a holomorphic map $f':X_0\to X'$ such that $f_1$ and $f_2$ factorize through it, i.e. $f_1=f_1'\circ f'$ for some $f_1':X'\to X_1$ and $f_2=f_2'\circ f'$ for some $f_2':X'\to X_2$. Then
    \begin{equation}\label{eq:CS}
        g_0\leq n_1 g_1+n_2 g_2+(n_1-1)(n_2-1).
    \end{equation}
\end{theorem}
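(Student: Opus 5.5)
We prove the Castelnuovo--Severi inequality by pulling back linear systems from the two quotients and comparing them on $X_0$ with Riemann--Roch. We may assume $n_1,n_2\ge 1$. Work with the product map $F=(f_1,f_2):X_0\to X_1\times X_2$ and let $Z=F(X_0)$, an irreducible curve in the surface $X_1\times X_2$.

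\textbf{Step 1: $F$ is birational onto $Z$.} Let $\nu:\tilde Z\to Z$ be the normalization; then $F$ factors as $X_0\to\tilde Z\to Z$. The compositions of $\nu$ with the two projections are holomorphic maps $f_1':\tilde Z\to X_1$ and $f_2':\tilde Z\to X_2$, and $f_1,f_2$ factor through $X_0\to\tilde Z$. If $\deg(X_0\to\tilde Z)>1$, then Riemann--Hurwitz (Theorem~\ref{RH}) gives $g(\tilde Z)\le g_0$. The case $g(\tilde Z)<g_0$ is excluded by hypothesis. The case $g(\tilde Z)=g_0$ with degree $>1$ forces $g_0\le 1$, and there the inequality is immediate because its right-hand side is at least $n_1g_1+n_2g_2\ge 0$, and is $\ge 1$ unless $g_1=g_2=0$ and $n_1=1$ or $n_2=1$, in which case $X_0\cong\mathbb{CP}^1$. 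Hence we may assume $F$ is birational onto $Z$, so $X_0$ is the normalization of $Z$.

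\textbf{Step 2: class and self-intersection of $Z$.} In $\mathrm{NS}(X_1\times X_2)$ write $a=[\mathrm{pt}\times X_2]$ and $b=[X_1\times\mathrm{pt}]$, with $a^2=b^2=0$ and $ab=1$. Then $Z\cdot a=n_1$ and $Z\cdot b=n_2$, so $Z\equiv n_2a+n_1b+\delta$ with $\delta$ orthogonal to both $a$ and $b$. The Hodge index theorem gives $\delta^2\le 0$, and therefore $Z^2\le 2n_1n_2$.

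\textbf{Step 3: adjunction.} The canonical class of the product is $K=(2g_1-2)a+(2g_2-2)b$. Adjunction gives $p_a(Z)=1+\tfrac12(Z^2+K\cdot Z)$. Since $\delta\cdot a=\delta\cdot b=0$, we get $K\cdot Z=n_1(2g_1-2)+n_2(2g_2-2)$. Combining,
\begin{equation*}
p_a(Z)\le 1+n_1n_2+n_1(g_1-1)+n_2(g_2-1)=n_1g_1+n_2g_2+(n_1-1)(n_2-1).
\end{equation*}
Since $X_0$ is the normalization of $Z$, $g_0\le p_a(Z)$, which is \eqref{eq:CS}.

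\textbf{Main obstacle.} The delicate point is Step 1: making precise that the non-factorization hypothesis is exactly what forces birationality, including the degenerate low-genus case. Step 2 needs the Hodge index theorem on the product surface. If one wants to stay within Riemann-surface language, an alternative route is Accola's: count the dimension of the space of functions $h_1(x)+h_2(x)$, with $h_i$ pulled back from $X_i$, against Riemann--Roch on $X_0$. I would use the intersection-theoretic proof above.
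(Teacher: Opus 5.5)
Your proof is correct. Note that the paper gives no proof of this theorem at all: it is quoted from Castelnuovo and Accola and used only for Corollary~\ref{coro} (with $n_1=n_2=2$, $g_1=0$, $g_2=1$).

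Your Step~1 also handles the paper's nonstandard hypothesis properly. The usual statement forbids a common factorization of degree $>1$, while the paper's forbids one through a surface of genus $g'<g_0$. By Riemann--Hurwitz these coincide for $g_0\ge 2$. For $g_0\le 1$ you check directly that the right-hand side is $\ge g_0$; it vanishes only if $g_1=g_2=0$ and some $n_i=1$, which forces $X_0\cong\mathbb{CP}^1$. Disposing of $g_0\le1$ at the outset would read more cleanly, but nothing is missing. Steps~2--3 (Hodge index for $\delta=Z-n_2a-n_1b\perp a+b$, $K\cdot\delta=0$, adjunction, $g_0\le p_a(Z)$) are the standard intersection-theoretic argument on $X_1\times X_2$.

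Compare this with the classical curve-theoretic route you attribute to Accola. That route stays on $X_0$ and needs only Riemann--Roch, so it transfers verbatim to function fields. Your argument imports surface theory, but it is shorter and shows exactly where the slack sits: equality holds iff $Z$ is smooth and $\delta^2=0$, i.e.\ $\delta\equiv 0$ by Hodge index.

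One correction to your sketch of that alternative: the Riemann--Roch count on $X_0$ is for \emph{products} $f_1^*h_1\cdot f_2^*h_2$ (sections of $f_1^*D_1+f_2^*D_2$), not sums $f_1^*h_1+f_2^*h_2$. The sums span at most $\ell(D_1)+\ell(D_2)$ dimensions. That is far too few, against $\deg(f_1^*D_1+f_2^*D_2)=n_1\deg D_1+n_2\deg D_2$, to yield the inequality.
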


\begin{corollary}\label{coro}
    A compact Riemann surface of genus greater than $3$ cannot be both bielliptic and hyperelliptic.
    \begin{proof}
        Let $C$ be a bielliptic and hyperelliptic curve of genus $g(C)$; then, there are two degree-$2$ covering maps $f_1:C\to\mathbb{C}\mathbb{P}^1$ and $f_2:C\to E$, where $E$ is an elliptic curve. Assume that $f_1$ and $f_2$ factorize through $h:C\to X$, where $X$ is a compact Riemann surface of genus $g(X)<g(C)$. Then there exist $f_1':X\to \mathbb{CP}^1$ and $f_2':X\to E$ such that $f_1=f_1'\circ h$ and $f_2=f_2'\circ h$, and we have $\deg f_1=\deg f_1'\cdot\deg h$ and $\deg f_2=\deg f_2'\cdot\deg h$. Since $g(X)<g(C)$, $\deg h>1$. Moreover, $\deg h\leq\deg f_1=\deg f_2=2$, so we have $\deg h=2$, and therefore $\deg f_1'=\deg f_2'=1$, which means that both $\mathbb{CP}^1$ and $E$ are isomorphic to $X$, which is a contradiction. Therefore, according to the Castelnuovo--Severi inequality \eqref{eq:CS}, $g(C)\leq2g(\mathbb{CP}^1)+2g(E)+1=3$.
    \end{proof}
\end{corollary}

\begin{theorem}[\cite{Mumford1974PrymVarieties}]\label{Mumford}
    Let $C$ be a hyperelliptic curve of genus $g$, and $\pi:C\to\mathbb{CP}^1$ be the corresponding hyperelliptic double covering with branch points $w_1,\dots,w_{2g+2}$. Separate the branch points into two groups of even cardinality: $\{w_1,\dots,w_{2g+2}\}=Y_1\cup Y_2$, with $2k_1+2$ elements in $Y_1$, $2k_2+2$ elements in $Y_2$, and $Y_1\cap Y_2=\varnothing$, hence $k_1+k_2+1=g$. Let $\pi_1:C_1\to\mathbb{CP}^1$ and $\pi_2:C_2\to\mathbb{CP}^1$ be the double coverings with branch points $Y_1$ and $Y_2$ respectively. Let $\widetilde{C}$ be the normalization of the fiber product $C_1\times_{\mathbb{CP}^1}C_2$, then $\widetilde{C}$ is an unramified double cover of $C$, and the associated Prym variety $\text{Prym}(\widetilde{C}/C)$ is isomorphic to $\mathrm{Jac}(C_1)\times\mathrm{Jac}(C_2)$. If one of $k_1$ and $k_2$ is zero, then the corresponding factor disappears.
\end{theorem}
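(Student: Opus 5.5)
The plan is to work with function fields and Riemann--Hurwitz for the first two claims, and then identify the Prym variety through the action of the Klein four-group on $\widetilde{C}$.

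\textbf{Setup.} Let $C$, $C_1$, $C_2$ have function fields $\mathbb{C}(x,y)$ with $y^2=\prod_{w\in Y_1\cup Y_2}(x-w)$, $\mathbb{C}(x,y_1)$ with $y_1^2=\prod_{Y_1}(x-w)$, and $\mathbb{C}(x,y_2)$ with $y_2^2=\prod_{Y_2}(x-w)$. If a branch point is at infinity, drop the corresponding factor after a M\"obius change.

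The normalization $\widetilde{C}$ of $C_1\times_{\mathbb{CP}^1}C_2$ has function field $\mathbb{C}(x,y_1,y_2)$. This field is a biquadratic extension of $\mathbb{C}(x)$, because $Y_1\neq Y_2$ makes $y_1$ and $y_2$ independent. Its Galois group is $\{1,\sigma_1,\sigma_2,\sigma_1\sigma_2\}$, where $\sigma_i$ negates $y_i$.

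The three intermediate quadratic subfields are $\mathbb{C}(C_1)$, $\mathbb{C}(C_2)$, and $\mathbb{C}(x,y_1y_2)\cong\mathbb{C}(C)$. The last one is the fixed field of $\sigma_1\sigma_2$, and it gives the degree-2 map $\widetilde{C}\to C$.

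\textbf{Unramified double cover.} I check unramifiedness locally.
\begin{itemize}
\item Over a point of $Y_1$, the element $y_1$ is ramified and $y_2$ is not.
\item In the compositum, the ramification index over $x$ is therefore exactly $2$, not $4$.
\item Hence $\widetilde{C}\to C$ is unramified above the preimages of $Y_1$, and symmetrically above those of $Y_2$.
\item Away from $Y_1\cup Y_2$ nothing ramifies.
\end{itemize}

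Riemann--Hurwitz (Theorem~\ref{RH}) applied to $\widetilde{C}\to\mathbb{CP}^1$ confirms this. There are $2g+2$ branch values, each with $2$ preimages of index $2$. This gives $2g(\widetilde{C})-2=4(-2)+2(2g+2)$, so $g(\widetilde{C})=2g-1$. That is exactly what an unramified double cover of $C$ requires.

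\textbf{Prym identification.} I use the standard decomposition for a Klein four-group action on $\widetilde{C}$. On holomorphic differentials,
\[
H^0(\widetilde{C},\Omega)=H^0(\Omega)^{+}\oplus\bigoplus_{\chi\neq1}H^0(\Omega)^{\chi},
\]
where $H^0(\Omega)^{+}=0$ because $\widetilde{C}/G=\mathbb{CP}^1$. The three nontrivial eigenspaces pull back from $C_1$, $C_2$ and $C$.

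By the isogeny criterion of Kani--Rosen (or directly via the analytic Jacobian), $\mathrm{Jac}(\widetilde{C})\sim\mathrm{Jac}(C)\times\mathrm{Jac}(C_1)\times\mathrm{Jac}(C_2)$. As a consistency check on dimensions, $k_1+k_2+g=2g-1$.

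The Prym is the connected component of the kernel of the norm map $\mathrm{Nm}:\mathrm{Jac}(\widetilde{C})\to\mathrm{Jac}(C)$, equivalently the $(-1)$-part of $\sigma_1\sigma_2$. Because $\sigma_1\sigma_2$ acts as $-1$ on the pullbacks from $C_1$ and $C_2$, the sum map
\[
\pi_1^*+\pi_2^*:\mathrm{Jac}(C_1)\times\mathrm{Jac}(C_2)\to\mathrm{Prym}(\widetilde{C}/C)
\]
lands in the Prym and is an isogeny for dimension reasons.

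\textbf{Main obstacle.} The hard step is upgrading this isogeny to an isomorphism, which is the content of Mumford's statement. I would first compute the kernel of $\pi_1^*+\pi_2^*$, which consists of pairs of $2$-torsion points. Then I would use the description of $2$-torsion on hyperelliptic Jacobians by even subsets of branch points to check that the kernel is trivial, modulo the convention that the Prym is principally polarized only up to the factor $2$.

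Finally, if $k_i=0$, then $C_i\cong\mathbb{CP}^1$ and its Jacobian is trivial, so that factor disappears from the product.
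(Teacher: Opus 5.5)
Everything through the isogeny is correct: the function-field setup, the local check that $\widetilde C\to C$ is unramified, the count $g(\widetilde C)=2g-1$, the Klein-group eigenspace decomposition, and the conclusion that $\phi=\pi_1^*+\pi_2^*:\mathrm{Jac}(C_1)\times\mathrm{Jac}(C_2)\to\mathrm{Prym}(\widetilde C/C)$ is a surjective isogeny. The paper gives no proof of its own here; it quotes Mumford, and downstream it only uses the isogeny to get period relations, so what you have established covers the paper's application.

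The theorem, however, asserts an \emph{isomorphism}, and you leave exactly that step as an intention. Your hedge about the Prym being principally polarized ``only up to the factor 2'' is beside the point, because whether $\ker\phi=0$ does not depend on any polarization convention. Two ingredients are missing. First, your claim that $\ker\phi$ consists of $2$-torsion points is not justified. It follows from the identity $\pi_{1*}\pi_2^*=\rho_1^*\rho_{2*}$, where $\rho_i:C_i\to\mathbb{CP}^1$; you can check this on a point using that $\sigma_1$ descends to the hyperelliptic involution of $C_1$. This identity vanishes on Jacobians since $\mathrm{Jac}(\mathbb{CP}^1)=0$, so applying $\pi_{1*}$ and $\pi_{2*}$ to $\pi_1^*a+\pi_2^*b=0$ gives $2a=2b=0$. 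Second, the even-subset description of $\mathrm{Jac}(C_i)[2]$ does not by itself decide whether $\pi_1^*\alpha=\pi_2^*\beta$. That is a question about linear equivalence on $\widetilde C$, which is not hyperelliptic, and your proposal gives no way to settle it.

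Either of the following closes the gap.

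(a) Mumford's route, which also gives the statement as principally polarized abelian varieties. For a degree-$2$ map, $(\pi_i^*)^*\Theta_{\widetilde C}\equiv 2\Theta_i$. The off-diagonal part of $\phi^*\Theta_{\widetilde C}$ is governed by $\pi_{1*}\pi_2^*$, which vanishes by the same identity. Hence $\phi^*(\Theta_{\widetilde C}|_P)\equiv 2(\Theta_1\boxplus\Theta_2)$. For an \'etale double cover, $\Theta_{\widetilde C}|_P\equiv 2\Xi$ with $\Xi$ principal. Since the N\'eron--Severi group is torsion-free, $\phi^*\Xi\equiv\Theta_1\boxplus\Theta_2$. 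Then $\chi(\phi^*\Xi)=\deg\phi\cdot\chi(\Xi)$ forces $\deg\phi=1$.

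(b) Finishing your own route. The map $\widetilde C\to C_i$ is ramified over the preimages of $Y_j$ for $j\neq i$, so $\pi_i^*$ is injective and $\ker\phi\cong\pi_1^*\mathrm{Jac}(C_1)[2]\cap\pi_2^*\mathrm{Jac}(C_2)[2]$. Let $D_a$ be the reduced fibre of $\widetilde C$ over a branch point $a$ and $F$ a full fibre, so that $\pi_i^*e_a=D_a$ and $2D_a\sim F$. A common element coming from even subsets $S_i\subset Y_i$ then gives $\sum_{a\in S_1\cup S_2}D_a\sim\tfrac{|S_1|+|S_2|}{2}F$. A function realizing this equivalence has square equal to a constant times $\prod_{a\in S_1\cup S_2}(x-a)$. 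By Kummer theory for $\mathbb{C}(x)(y_1,y_2)$, that product is a square only if $S_1\cup S_2\in\{\varnothing,Y_1,Y_2,Y_1\cup Y_2\}$, i.e.\ both classes are trivial.
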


\bibliographystyle{apsrev4-2}
\bibliography{references}
\end{document}